\documentclass[journal]{IEEEtran}

\usepackage[cmex10]{amsmath}
\usepackage{amssymb,bm,amsfonts,mathrsfs}
\usepackage[pdftex]{graphicx}
\usepackage{pgfplots}
\pgfplotsset{compat=1.17}
\usepackage{algorithm}
\usepackage{algpseudocode}
\usepackage{array,booktabs,tabularx,multirow,makecell}
\usepackage{subcaption}
\usepackage{stfloats}
\usepackage{siunitx}
\usepackage{gensymb}
\usepackage[draft,bookmarks=false]{hyperref}
\usepackage{tikz}
\usetikzlibrary{decorations.markings,decorations.pathmorphing}
\usepackage[noadjust]{cite}
\usepackage{xcolor}
\usepackage{url} 
\allowdisplaybreaks

\newtheorem{proposition}{Proposition}
\newtheorem{remark}{Remark}
\def\BibTeX{{\rm B\kern-.05em{\sc i\kern-.025em b}\kern-.08em
    T\kern-.1667em\lower.7ex\hbox{E}\kern-.125emX}}

\begin{document}
\title{Enabling mmWave Communications with VCSEL-Based Light-Emitting Reconfigurable Intelligent Surfaces}
\author{
Rashid Iqbal\textsuperscript{*},
Dimitrios Bozanis\textsuperscript{\dag},
Dimitrios Tyrovolas\textsuperscript{\dag},
Christos K. Liaskos\textsuperscript{\ddag},\\
Muhammad Ali Imran\textsuperscript{*},
George K. Karagiannidis\textsuperscript{\dag},
Hanaa Abumarshoud\textsuperscript{*}\\[2pt]
\textsuperscript{*}James Watt School of Engineering, University of Glasgow, Glasgow G12 8QQ, U.K.\\
e-mail: r.iqbal.1@research.gla.ac.uk, \{muhammad.imran, hanaa.abumarshoud\}@glasgow.ac.uk\\
\textsuperscript{\dag}Department of Electrical and Computer Engineering, Aristotle University of Thessaloniki, 54124 Thessaloniki, Greece\\
e-mail: \{dimimpoz, tyrovolas, geokarag\}@auth.gr\\
\textsuperscript{\ddag}University of Ioannina and FORTH, Greece\\
e-mail: cliaskos@uoi.gr
}
\maketitle
\thispagestyle{empty}
\pagestyle{empty}
\begin{abstract}
This paper proposes a light-emitting reconfigurable intelligent surface (LeRIS) architecture that integrates vertical cavity surface-emitting lasers (VCSELs) to jointly support user localization and mmWave communication. By leveraging the directional Gaussian beams and dual-mode diversity of VCSELs, we derive a closed-form method for estimating user position and orientation using only three VCSEL sources. These estimates are then used to configure LeRIS panels for directional mmWave beamforming, enabling optimized wave propagation in programmable wireless environments. Simulation results demonstrate that the proposed system achieves millimeter-level localization accuracy and maintains high spectral efficiency. These findings establish VCSEL-integrated LeRIS as a scalable and multifunctional solution for future 6G programmable wireless environments.
\end{abstract}
\vspace{-2mm}
\begin{IEEEkeywords}
Light-emitting RIS (LeRIS), Localization, VCSEL, Optical wireless positioning
\end{IEEEkeywords}

\IEEEpeerreviewmaketitle

\vspace{-5mm}

\section{Introduction}
The evolution toward sixth generation networks is fueled by diverse services and applications such as holographic telepresence, the metaverse, and industrial automation, which demand massive user access, high bandwidth efficiency, and extremely low latency \cite{6GApplications}. To satisfy these requirements, future systems will increasingly rely on millimeter wave (mmWave) frequencies, yet operation in such bands faces critical challenges, including severe path loss and susceptibility to blockages. To address these obstacles, the concept of programmable wireless environments (PWEs) has been introduced, where reconfigurable intelligent surfaces (RISs) reshape wave propagation on demand and provide the precise manipulation of electromagnetic signals required for reliable mmWave operation. The effectiveness of this paradigm, however, depends on accurate awareness of user positions, since inaccurate knowledge can lead to suboptimal RIS configuration and degraded service quality. This realization has motivated the development of multi-functional RISs, which extend PWEs beyond communication enhancement by incorporating localization capabilities \cite{MFRIS2025}. Current research primarily explores this through enhanced electromagnetic capabilities using simultaneous reflection and refraction combined with advanced configuration schemes \cite{MFRIS2025}. However, relying solely on electromagnetic configuration may introduce performance trade-offs, since optimizing the RIS for one task can compromise its effectiveness for another. As a result, multi-functional RISs are expected to evolve through lightweight architectural extensions, where complementary elements enhance environmental awareness without undermining their fundamental wavefront manipulation role.

Among possible extensions, optical sources stand out as candidates since their predominantly line-of-sight behavior provides stable references for position and orientation through multiple inherent degrees of freedom, while their operation in a distinct spectral domain ensures isolation from the spectrum manipulated by RISs \cite{Tyrovolas2025Leris}. Several works have investigated light-emitting diode (LED) systems as optical anchors for localization, achieving centimeter-level positioning accuracy \cite{boz_loc22, boz_loc25}. However, their diffuse emission patterns constrain spatial resolution and preclude precise orientation estimation. Building on these observations, vertical cavity surface-emitting lasers (VCSELs) have emerged as compact alternatives that combine precise beam control with fast modulation and low-power operation. Their highly directional Gaussian-profile beams provide increased angular resolution, while multiple optical modes offer additional diversity for enhanced localization. In this direction, the authors of \cite{Zeng2021_VCSEL_BeamActivation} proposed a VCSEL array system with novel beam activation schemes that maintained connectivity under random user orientation while analyzing the tradeoff between divergence and throughput, while the authors of \cite{R1} proposed VCSEL-based approaches using deep neural networks for joint position and orientation estimation with lower localization error compared to LED-based systems, and the authors of \cite{sarbazi} investigated ultra-dense VCSEL array architectures supporting multi-gigabit transmission.

The concept of light-emitting RISs (LeRIS) has been introduced, showing that embedding optical anchors within RIS panels can enhance localization accuracy and directly assist RIS configuration for improved spectral efficiency \cite{Tyrovolas2025Leris}. However, current LeRIS designs have been limited to LED implementations, and to the best of the authors' knowledge, no prior work has established closed-form localization and orientation estimation schemes that exploit the dual-mode spatial degrees of freedom of VCSELs for LeRIS.

Motivated by the above, this paper introduces a VCSEL-based LeRIS-assisted PWE architecture that jointly supports localization, orientation estimation, and mmWave communication. Specifically, we introduce an architecture where three dual-mode VCSEL sources are integrated along the perimeter of the panel to act as structured optical anchors. We then derive closed-form expressions that exploit the Gaussian beam profile and dual-mode diversity to ensure joint position and orientation estimation. Through extensive simulations, we demonstrate that the proposed system achieves millimeter-level localization accuracy and robust orientation estimation while sustaining substantial spectral efficiency gains for mmWave communication, thereby establishing VCSEL-based LeRIS as a practical pathway for extending RIS capabilities toward multi-functional operation within PWEs.
\vspace{-2.5mm}
\section{System Model}\label{sec:II}\vspace{-1mm}    
We consider a PWE in an indoor space, Fig.~\ref{fig:system_model}, where a mmWave access point (AP) with a directional antenna is outside the room and serves a single user equipment (UE) inside. The UE has a directional antenna and an optical receiver. As mmWave signals are susceptible to blockage in indoor spaces, a direct LoS link from the AP to the UE is often infeasible. To address this, four LeRIS panels are mounted at the centers of the walls \cite{R2}. Each LeRIS integrates a programmable array of passive reflecting elements and VCSELs, enabling localization and joint control of electromagnetic reflection for dynamic routing based on UE awareness.

To localize the user, the UE estimates its position from optical signals emitted by VCSELs around the LeRIS perimeters. The VCSELs produce narrow Gaussian beams that scan small angular ranges to cover azimuth sectors, which are more appropriate for localization services due to their directivity, compared to the diffusive nature of LEDs. Each VCSEL uses a distinct infrared frequency, allowing the UE to separate the signals spectrally, infer its position, and report it to the AP over a control channel. The AP then configures the LeRIS panels to form directional mmWave links to the UE.
\vspace{-2mm}    
\begin{figure}[h]
\centering
\includegraphics[width=0.8\linewidth]{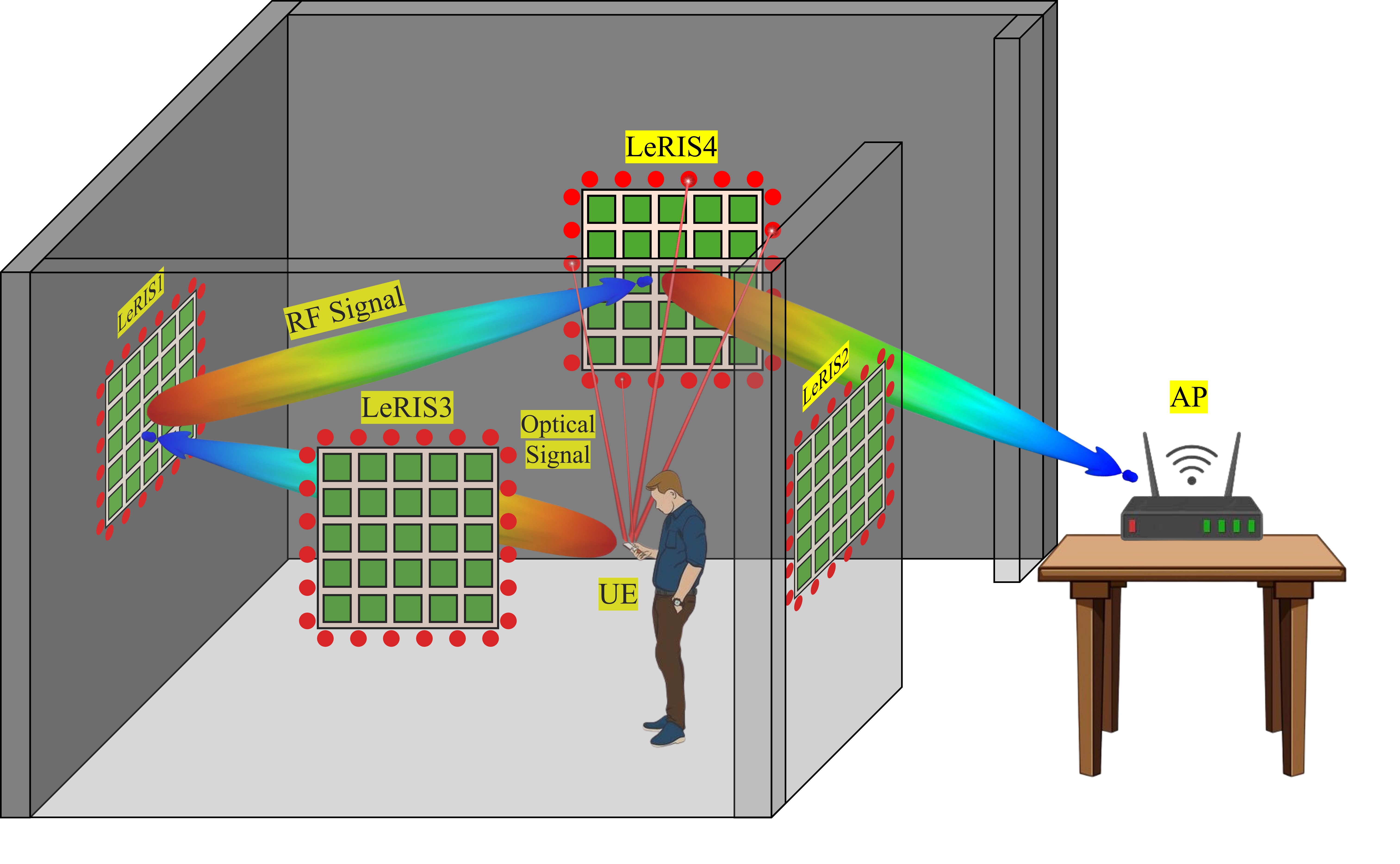}
\captionsetup{font=small}
\caption{Programmable Wireless environment with VCSEL-based LeRIS panels.}
\label{fig:system_model}
\vspace{-6mm}    
\end{figure}
  
\subsection{VCSELs Channel Modeling}\vspace{-1mm}   
To support high-throughput communication under dynamic indoor conditions, the PWE leverages the deployed LeRISs to form a mmWave signal route toward each selected UE, by utilizing the transmitted optical signals of the VCSELs to extract its location. Unlike conventional diffuse light emitters, such as LEDs, which exhibit broad angular dispersion and lower spatial coherence, VCSELs generate narrow, low-divergence beams with Gaussian spatial intensity profiles that can be tightly controlled and directionally steered, thus being particularly suitable for precise localization.
The optical channel between the $i$-th VCSEL and the UE is characterized by the intensity $I_{i}$ received at the user, assumed to follow a Gaussian distribution~\cite{R1}:
{\small
\begin{equation}
I_{i}(r_0, d_0) = \frac{2P_{t,i}}{\pi w^2(d_0)} \exp\left(-\frac{2r_0^2}{w^2(d_0)}\right),
\label{eq:gaussian_intensity}
\end{equation}}
\vspace{-2.5mm}\\
where $P_{t,i}$ is the transmitted optical power, $r_0$ is the radial distance from the beam center, and $d_0$ is the axial distance. The beam spot size $w(d_0)$ evolves as
{\small
\begin{equation}
w(d_0)=w_0 \sqrt{1+\left(\frac{d_0}{z_{\mathrm R}}\right)^{2}},
\label{eq:beam_spot}
\end{equation}}
\vspace{-2.5mm}\\
where $w_0 = \frac{\lambda_o}{\pi \theta_{\mathrm{div}}}$ is the beam waist, $\theta_{\mathrm{div}}$ is the divergence angle, $\lambda_o$ is the wavelength, and $z_{\mathrm R} = \frac{\pi w_0^{2}}{\lambda_o}$ is the Rayleigh range.
The angular intensity at the UE becomes
{\small
\begin{equation}
I_{i}(d_i, \phi_i) = \frac{2P_{t,i}}{\pi w^2(d_{i} \cos \phi_i)} \exp\left(-\frac{2d_{i}^2 \sin^2 \phi_i}{w^2(d_{i} \cos \phi_i)}\right),
\label{eq:angled_intensity}
\end{equation}}
\vspace{-2.5mm}\\
where $d_i$ is the distance and $\phi_i$ the irradiance angle. The received LoS optical power at the UE’s photodetector (PD) is then
\begin{equation}
P_{\mathrm{LoS},i} = I_{i}(d_i, \phi_i) A_{\mathrm{PD}} \cos \psi_{i} \, \mathrm{rect}\left(\frac{\psi_{i}}{\Psi}\right),
\label{eq:received_power}
\end{equation}
with $A_{\mathrm{PD}}$ denoting PD area, $\psi_i$ the incidence angle, and $\Psi$ the half-angle field of view (FoV). The $\mathrm{rect}(\cdot)$ function enforces angular filtering based on the FoV. However, the total power measured at the PD includes noise, yielding
\begin{equation} \label{Noise}
P_{r,i} = P_{\mathrm{LoS},i}+ P_{n,i},
\end{equation}
where $P_{n,i} = B_o S_i$ models the noise variance, with $S_i$ the one-sided power spectral density
\begin{equation}
\begin{split}
S_{i} &= A_{K}
+ R_\mathrm{PD}  P_{\mathrm{LoS},i}\left( 2q  + \mathrm{RIN} R_\mathrm{PD}  P_{\mathrm{LoS},i}\right),
\end{split}
\label{eq:noise_psd}
\end{equation}
and $B_o$ the PD bandwidth. Here, $A_K = \frac{4 k_{B} T F_n}{R_L}$ is the thermal noise term, where $k_{B}$ is Boltzmann’s constant, $T$ is temperature, $R_L$ is load resistance, and $F_n$ the preamplifier noise. The terms $q$, $R_{\mathrm{PD}}$, and $\mathrm{RIN}$ denote the electron charge, PD responsivity, and relative intensity noise, respectively \cite{Kazemi}. Thus, the optical channel captures the physical beam propagation, receiver orientation, and power conversion effects.  
\subsection{mmWave Channel Modelling}     
Based on the estimated user location and orientation, the AP selects a cascaded route $L$ of participating LeRISs that connects the AP to the UE, where each LeRIS along this path is configured to steer the incident mmWave signal toward the next LeRIS, while the final LeRIS steers energy toward the UE according to its estimated angular location $(\hat{\theta}_{r}, \hat{\phi}_{r})$. For this purpose, each LeRIS panel, composed of $M \times N$ square reflecting elements with side length $D$, shapes the outgoing wavefront based on the intended direction. Thus, the effective gain of each LeRIS produced in the target direction $(\theta, \phi)$ is given by 
{\small
\begin{equation}
    G \left(\theta, \phi \right) = {\eta_{\mathrm{eff}}}  \dfrac{4 \pi \lvert F\left(\theta, \phi \right)\rvert ^2}{\int_{0}^{2\pi} \int_{0}^{\frac{\pi}{2}}\lvert F\left(\theta, \phi \right)\rvert ^2 \mathrm{sin}\left(\theta\right) d\theta d\phi},
    \end{equation}}\\
where $\eta_{\mathrm{eff}}$ denotes the aperture efficiency~\cite{cui}. Moreover, $F(\theta, \phi)$ is the far-field radiation pattern generated by the LeRIS array, which is given as \cite{R2}
{\small
\begin{equation}
F(\theta, \phi) = \sum_{m=1}^M \sum_{n=1}^N e^{j\left(k_0 \zeta_{mn}(\theta,\phi) + \omega_{mn} + \Phi_{mn}\right)},
\end{equation}}\\
where $k_0 = \frac{2\pi}{\lambda_m}$ is the wavenumber and $\lambda_m$ is the wavelength of the mmWave channel, and the term $\zeta_{mn}(\theta, \phi)$ expresses the angular-dependent phase shift associated with the array’s geometry and can be expressed as \cite{R2}
{\small
\begin{equation}
\begin{split}
\zeta_{mn}(\theta, \phi) = D \sin(\theta) \left[ \left(m - \tfrac{1}{2} \right) \cos(\phi) + \left(n - \tfrac{1}{2} \right) \sin(\phi) \right] \\
+ (x_T - x_{R}) \sin(\theta) \cos(\phi)
+ (y_T - y_{R}) \sin(\theta) \sin(\phi) \\
+ (z_T - z_{R}) \cos(\theta),
\end{split}
\end{equation}}\\
where $(x_T, y_T, z_T)$ are the coordinates of the transmitting or reflecting node, and $(x_{R}, y_{R}, z_{R})$ denote the center of the receiving node. Additionally, the phase offset $\omega_{mn}$, corresponding to the path delay from transmitter to the $(m,n)$-th reflecting element, is given by \cite{R2}
{\small
\begin{equation}
\begin{split}
&\omega_{mn} = k_0 \Big( \left(x_T - D\left(m - \tfrac{1}{2}\right) \sin(\theta)\cos(\phi) - x_{R} \right)^2 \\
&+ \left(y_T - y_{R}\right)^2 \\& + \left(z_T - D\left(m - \tfrac{1}{2}\right) \sin(\theta)\sin(\phi) - z_{R} \right)^2 \Big)^{1/2}.
\end{split}
\end{equation}}\\
Finally, to steer energy accurately in the direction $(\hat{\theta}_{r}, \hat{\phi}_{r})$, the phase shift $\Phi_{mn}$ applied by each element is configured as~\cite{Abadal}
{\small
\begin{equation}
\begin{split}
\Phi_{mn} &= -k_0 D \left[m \cos(\hat{\phi}_{r}) \sin(\hat{\theta}_{r}) + n \sin(\hat{\phi}_{r}) \sin(\hat{\theta}_{r}) \right] \\& - \omega_{mn}.
\end{split}
\end{equation}}\\
When steering toward another LeRIS panel, the exact position and orientation of the target node are known. Under such conditions, the beam is perfectly aligned with the desired direction, thus, the far-field response achieves maximum constructive interference, and the achievable gain becomes
{\small
\begin{equation}
    G_{\mathrm{max}}= \eta_{\mathrm{eff}} MN.
\end{equation}}\\
Therefore, given that each LeRIS along the route, excluding the last, performs perfect beam steering toward a known position, the cumulative gain contributed by the first $L-1$ LeRIS panels can be expressed as
{\small
\begin{equation}
G_{\mathrm{cas}} = \left(A_{\mathrm{eff}} G_{\mathrm{max}}\right)^{L-1},
\end{equation}}\\
where
{\small
\begin{equation}
A_{\mathrm{eff}} = \frac{MN \lambda_m^2}{4\pi},
\end{equation}}\\
denotes the effective aperture of each LeRIS panel~\cite{Abadal}. However, for the last participating LeRIS in the route, the signal will be steered toward the estimated UE direction $(\hat{\theta}_{r}, \hat{\phi}_{r})$, derived from the proposed optical localization system. Specifically, due to potential angular mismatch between the estimated and true UE direction $(\theta_{u}, \phi_{u})$, the effective beam alignment may degrade, resulting in a reduced directional gain~\cite{R2}. As a result, the achievable gain by the final LeRIS in the true UE direction is expressed as

\begin{equation}
G_L(\theta_{u}, \phi_{u}) = \eta_{\mathrm{eff}} \cdot \frac{4\pi \left|F(\theta_{u}, \phi_{u})\right|^2}{\int_0^{2\pi} \int_0^{\pi} \left|F(\theta, \phi)\right|^2 \sin \left(\theta\right) d\theta d\phi}.
\end{equation}
The UE is assumed to be equipped with a directional antenna featuring a constrained FoV, which limits the angular range over which the final LeRIS-to-UE link can be established. Specifically, a viable connection is possible only when the incoming beam direction lies within the overlapping FoVs of both the LeRIS and the UE. The corresponding directional gain $G_r(\theta_u)$ for a beam with directivity angle $\theta_m$ is
{\small
\begin{equation}
G_r(\theta_u)=
\begin{cases}
\frac{2\pi}{\theta_m}, & \text{if } |\theta_u|\le \theta_m, \\\\
0, & \text{otherwise}.
\end{cases}
\label{eq:user_gain}
\end{equation}}
\vspace{-2.5mm}\\
Consequently, the total directional gain accumulated along the entire route is expressed as
{\small
\begin{equation}
G_{\mathrm{r}} = G_{\mathrm{cas}} \cdot A_{\mathrm{eff}} \cdot G_L(\theta_{u}, \phi_{u}) \cdot G_r(\theta_u),
\end{equation}}
\vspace{-2.5mm}\\
where the final factor accounts for the UE’s directional reception. Assuming narrowband flat-fading mmWave transmission, the baseband signal received by the UE is given by
\begin{equation}
y = \sqrt{l_p G_t P_t G_{\mathrm{r}}} \cdot x + w,
\end{equation}
where $x$ is the unit-energy transmitted symbol, $G_t$ is the AP antenna gain, $P_t$ is the transmit power, and $w$ denotes the AWGN at the UE with variance $\sigma^2$. Moreover, assuming a cascaded LeRIS network with $L$ LeRISs, $l_p$ denotes the total path loss over the route, given by~\cite{Cascaded}
{\small
\begin{equation}
l_p = \prod_{i=1}^{L+1} C_0 \left(\frac{d_{r,i}}{d_{r,0}}\right)^{-n_i},
\end{equation}
}
\vspace{-2.5mm}\\
where $d_{r,i}$ represents the distance of each segment, $n_i$ is the path loss exponent for the $i$-th segment, and $C_0 = \frac{\lambda_m^2}{(4\pi d_{r,0})^2}$ is the free space path loss at the reference distance $d_{r,0}$. To this end, the achievable spectral efficiency at the user is then given by
{\small
\begin{equation}\label{rate_1}
R = \log_2\left(1 + \frac{l_p G_t \gamma_t G_{\mathrm{r}}}{\sigma^2} \right),
\end{equation}}
\vspace{-2.5mm}\\
where $\gamma_t=\frac{P_t}{\sigma^2}$, with \eqref{rate_1} capturing the impact of beam steering accuracy, directional alignment, and environmental path loss on the end-to-end spectral efficiency. 
\vspace{-4mm}
\section{localization through VCSEL based LeRIS}\label{sec:III}
To determine the UE position within the PWE, received signal strength (RSS) measurements from VCSELs are used, since the distance-dependent attenuation of Gaussian beams carries geometric information for localization. However, the highly directionality of VCSELs improves spatial resolution but limits coverage, so detection and placement depend on whether the PD lies within each narrow beam. Specifically, as it can be seen in \eqref{eq:received_power}, the PD receives power only within the beam footprint, making it essential to ensure that enough VCSELs are deployed for reliable estimation of both the UE's position and orientation. This requirement can be reduced by operating each VCSEL in more than one optical mode. In practice, VCSELs can support more than one transverse lasing mode, and different modes lead to Gaussian beams with different waist radii and Rayleigh ranges, providing distinct distance dependences without increasing the number of emitters. The following proposition states that multi-mode emission enables unique localization is achieved with three VCSELs under appropriate geometric and identifiability conditions.

\begin{proposition}
The UE position $\mathbf{r}$ and orientation $\mathbf{n}$ can be uniquely determined from \emph{three} VCSELs if each emits sequentially in two distinguishable optical modes with known $\{P_{t,i}^{(m)}, w_{0,i}^{(m)}, z_{R,i}^{(m)}\}$ for $m\!\in\!\{a,b\}$, the VCSEL locations are non-collinear, and the unit vectors $\mathbf{u}_i=\frac{\mathbf{s}_i-\mathbf{r}}{\|\mathbf{s}_i-\mathbf{r}\|}$, $i=1,2,3$, are linearly independent at the solution.
\end{proposition}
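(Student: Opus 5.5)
The key idea is that the ratio of the two modal powers received from the same VCSEL cancels every factor that depends on the receiver. By \eqref{eq:received_power}, the term $A_{\mathrm{PD}}\cos\psi_i\,\mathrm{rect}(\psi_i/\Psi)$ is common to both modes of VCSEL $i$, because the incidence angle $\psi_i$ depends only on $\mathbf{u}_i$ and $\mathbf{n}$. The unknown orientation therefore drops out of
\begin{equation*}
\rho_i=\frac{P^{(a)}_{\mathrm{LoS},i}}{P^{(b)}_{\mathrm{LoS},i}}=\frac{I_i^{(a)}(d_i,\phi_i)}{I_i^{(b)}(d_i,\phi_i)},
\end{equation*}
which is a known function of the pair $(d_i,\phi_i)$ only, through \eqref{eq:angled_intensity} and \eqref{eq:beam_spot}. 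I will first show that each VCSEL yields the geometric pair $(d_i,\phi_i)$ from its two modal readings. I will then recover $\mathbf{r}$ from the three pairs using non-collinearity, and finally recover $\mathbf{n}$ from the three $\cos\psi_i$ values using the linear independence of the $\mathbf{u}_i$.

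\textbf{Step 1 (per-VCSEL inversion).} Write $z_i=d_i\cos\phi_i$ for the axial distance and $\varrho_i=d_i\sin\phi_i$ for the radial offset. Then
\begin{equation*}
\ln\rho_i=\ln\frac{P^{(a)}_{t,i}\,w_b^2(z_i)}{P^{(b)}_{t,i}\,w_a^2(z_i)}-2\varrho_i^2\Big(\frac{1}{w_a^2(z_i)}-\frac{1}{w_b^2(z_i)}\Big),
\end{equation*}
which is one equation in the two unknowns $(z_i,\varrho_i)$. The second equation comes from the absolute level of one mode, for example $P^{(a)}_{\mathrm{LoS},i}$. That level also contains the factor $\cos\psi_i$, so it is only usable jointly with the other two VCSELs.

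I therefore plan to treat the full system as $6$ equations in $5$ unknowns: $3$ for $\mathbf{r}$ and $2$ for the unit vector $\mathbf{n}$. The concrete reduction is as follows. The three ratios $\rho_i$ give three orientation-free equations in $\mathbf{r}$ alone. This is because each beam axis is known, so $z_i$ and $\varrho_i$ are explicit functions of $\mathbf{r}$ and $\mathbf{s}_i$. Non-collinear sources make the three constraint surfaces intersect transversally, and I will argue the intersection determines $\mathbf{r}$, taking the point inside the room and within the beam footprints when a mirror solution exists.

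\textbf{Step 2 (orientation).} With $\mathbf{r}$ known, the intensities $I_i^{(a)}$ are known. Each absolute measurement then gives $\cos\psi_i=\mathbf{u}_i^{\top}\mathbf{n}=c_i$, with $c_i=P^{(a)}_{\mathrm{LoS},i}/(I_i^{(a)}A_{\mathrm{PD}})$. Stacking these gives $U\mathbf{n}=\mathbf{c}$ with $U=[\mathbf{u}_1,\mathbf{u}_2,\mathbf{u}_3]^{\top}$. By hypothesis $U$ is invertible, so $\mathbf{n}=U^{-1}\mathbf{c}$ is unique. The leftover constraint $\|\mathbf{n}\|=1$ is then a consistency check, which is where the extra sixth equation goes.

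\textbf{Main obstacle.} The hard part is the uniqueness claim in Step 1. Each $\rho_i$ is a transcendental function of $(z_i,\varrho_i)$ and need not be monotone, so three such surfaces could intersect in more than one point. I will first try to show that, for fixed $z_i$, $\ln\rho_i$ is strictly monotone in $\varrho_i^2$. This holds whenever $w_a\neq w_b$, because the modes have distinct waists. Each ratio then defines a single-valued cone-like surface $\varrho_i=g_i(z_i)$ around the $i$-th beam axis. The Jacobian of the map $\mathbf{r}\mapsto(\rho_1,\rho_2,\rho_3)$ factors through the gradients of these surfaces, and I expect those gradients to be spanned by combinations of the $\mathbf{u}_i$ and the beam axes, so the independence hypothesis would give local uniqueness via the inverse function theorem. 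If global uniqueness cannot be shown cleanly, I will settle for this local statement inside the beam footprints. That is the natural reading of ``at the solution'' in the proposition.
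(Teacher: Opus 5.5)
Your Step 2 is the same as the paper's orientation step. Step 1, however, does not close, and it fails exactly where you flag it.

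By keeping a general irradiance angle $\phi_i$, each modal ratio $\rho_i$ gives only one equation in $(z_i,\varrho_i)$. So $\mathbf{r}$ must be found as the intersection of three transcendental level surfaces, and neither hypothesis of the proposition controls that intersection. Write $\mathbf{a}_i$ for the $i$-th beam axis. Then $\nabla_{\mathbf{r}} z_i=\mathbf{a}_i$ and $\nabla_{\mathbf{r}}\varrho_i^2=-2d_i\mathbf{u}_i-2z_i\mathbf{a}_i$, so $\nabla_{\mathbf{r}}\rho_i$ is a combination of $\mathbf{a}_i$ and $\mathbf{u}_i$ with profile-dependent coefficients. Linear independence of $\mathbf{u}_1,\mathbf{u}_2,\mathbf{u}_3$ therefore does not make your Jacobian nonsingular. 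Even if it did, the inverse function theorem would give only the local uniqueness you fall back on, which is weaker than the statement.

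Your monotonicity claim also needs $w_a(z_i)\neq w_b(z_i)$ at the actual axial distance, not just distinct waists. With $z_R^{(m)}=\pi (w_0^{(m)})^2/\lambda_o$, the two spot sizes coincide at $z=\sqrt{z_R^{(a)}z_R^{(b)}}$, where $\rho_i$ no longer depends on $\varrho_i$.

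Finally, ``linearly independent at the solution'' is not a hint toward a local statement. It is precisely the condition $\det U\neq 0$ for the orientation solve, and that is its only role in the paper.

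The missing idea is the paper's simplification $\phi_i=0$, justified by the narrow, steered VCSEL beams. Then $P_i^{(m)}=\beta_i^{(m)}(d_i)\,(\mathbf{n}\cdot\mathbf{u}_i)$, and the modal ratio
\[
R_i=\frac{B_a}{B_b}\cdot\frac{1+\left(d_i/z_R^{(b)}\right)^2}{1+\left(d_i/z_R^{(a)}\right)^2}
\]
depends on $d_i$ alone. It is a linear-fractional, strictly monotone function of $d_i^2$ whenever $z_R^{(a)}\neq z_R^{(b)}$, so each VCSEL yields its range $d_i$ in closed form.

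Position then follows by plain trilateration:
\begin{itemize}
\item Subtracting $\|\mathbf{r}-\mathbf{s}_1\|^2=d_1^2$ from the equations for $i=2,3$ gives two linear equations with rows $(\mathbf{s}_2-\mathbf{s}_1)^\top$ and $(\mathbf{s}_3-\mathbf{s}_1)^\top$.
\item These rows are independent exactly because the VCSELs are non-collinear, so the solution set is a line.
\item Intersecting that line with one sphere leaves at most two points, mirror images through the plane of the VCSELs (the wall), and only one of them lies inside the room.
\end{itemize}
This gives global uniqueness of $\mathbf{r}$ with no Jacobian argument, and your Step 2 then completes the proof. As a side note, at $\phi_i=0$ one has $\mathbf{a}_i=-\mathbf{u}_i$, so your gradients do become parallel to the $\mathbf{u}_i$, but trilateration makes that route unnecessary.
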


\begin{IEEEproof}
Let the $i$-th VCSEL be at $\mathbf{s}_i\in\mathbb{R}^3$, and the UE at $\mathbf{r}\in\mathbb{R}^3$, with unit normal $\mathbf{n}\in\mathbb{R}^3$ satisfying $|\mathbf{n}|=1$. Owing to the high directionality of VCSELs, the irradiance can be considered aligned with the PD axis, i.e., $\phi=0^\circ$, while only measurements fulfilling $\mathbf{n}\cdot\mathbf{s}_i\geq 0$ are retained, corresponding to the FoV constraint of the PD. By assuming that the effect of noise is negligible, and setting each VCSEL to emit sequentially in two optical modes, indexed by $m\in\{a,b\}$, with known parameters consisting of the transmit power $P^{(m)}_{t,i}$, the waist radius $w^{(m)}_{0}$, and the Rayleigh range $z^{(m)}_{R}=\pi \left(w_{0}^{(m)}\right)^2/\lambda_o$, the received power at the UE for each mode is
\begin{equation}
P_i^{(m)}=\beta_i^{(m)}(d_i)\,(\mathbf{n}\cdot\mathbf{u}_i),
\label{eq:dual_power}
\end{equation}
where $\beta_i^{(m)}(d)=\frac{2A_{\mathrm{PD}}P^{(m)}_{t,i}}{\pi w_m^2(d)}$, and $w^{(m)}(d)=w_{0}^{(m)}\sqrt{1+\left(d/z_R^{(m)}\right)^2}$.

Since both modes originate from the same VCSEL, they share the same direction vector $\mathbf{u}_i$, thus the factor $(\mathbf{n}\cdot\mathbf{u}_i)$ in \eqref{eq:dual_power} is identical for $m=a$ and $m=b$. To eliminate this dependence on the unknown $\mathbf{n}$, we can consider the ratio $R_i$ of the received powers, which is defined as
\begin{equation}
R_i = \frac{P_i^{(a)}}{P_i^{(b)}} 
= \frac{\beta_i^{(a)}(d_i)}{\beta_i^{(b)}(d_i)}
= \frac{B_a}{B_b}\cdot \frac{1+\left(d_i/z^{(b)}_{R}\right)^2}{1+\left(d_i/z^{(a)}_{R}\right)^2},
\end{equation}
where $B_m=\frac{2A_{\mathrm{PD}}P_{t,i}^{(m)}}{\pi \left(w_{0}^{(m)}\right)^2}$. Thus, after some algebraic manipulations, the distance $d_i$ from the $i$-th VCSEL and the UE can be recovered in closed form as
\begin{equation}
d_i=\sqrt{\frac{1-\tilde{R}_i}{\frac{\tilde{R}_i}{\left(z^{(a)}_{R}\right)^2}-\frac{1}{\left(z^{(b)}_{R}\right)^2}}},
\label{d:RSS}
\end{equation}
where $\tilde{R}_i=\frac{R_i B_b}{B_a}$. With three such ranges from non-collinear VCSELs, the UE position $\mathbf{r}$ is obtained by trilateration as the common intersection of
\begin{equation}
\|\mathbf{r}-\mathbf{s}_i\|^2=d_i^2,\qquad i=1,2,3,
\label{eq:spheres}
\end{equation}
which reduces to a finite set. To avoid solving the full quadratic system, subtract the $i=1$ equation from $i=2,3$ to eliminate quadratic terms and obtain
\begin{equation}
2\begin{bmatrix}
(\mathbf{s}_2-\mathbf{s}_1)^\top\\
(\mathbf{s}_3-\mathbf{s}_1)^\top
\end{bmatrix}\mathbf{r}
=
\begin{bmatrix}
\|\mathbf{s}_2\|^2-\|\mathbf{s}_1\|^2-(d_2^2-d_1^2)\\[1mm]
\|\mathbf{s}_3\|^2-\|\mathbf{s}_1\|^2-(d_3^2-d_1^2)
\end{bmatrix},
\label{eq:lin_system}
\end{equation}
which describes the line of intersection of the two planes. Intersecting this line with any sphere in \eqref{eq:spheres} yields the unique feasible point inside the room.

With $\hat{\mathbf{r}}$ determined, we define
\begin{equation}
\hat{\mathbf{u}}_i=\frac{\mathbf{s}_i-\hat{\mathbf{r}}}{\|\mathbf{s}_i-\hat{\mathbf{r}}\|},\qquad
c_i^{(m)}=\frac{P_i^{(m)}}{\beta_i^{(m)}(d_i)}.
\end{equation}
Using a single mode (e.g., $m=a$), the orientation follows from
\begin{equation}
\underbrace{\begin{bmatrix}
\hat{\mathbf{u}}_1^\top\\ \hat{\mathbf{u}}_2^\top\\ \hat{\mathbf{u}}_3^\top
\end{bmatrix}}_{U\in\mathbb{R}^{3\times 3}}\mathbf{n}
=
\underbrace{\begin{bmatrix}
c_1^{(a)}\\ c_2^{(a)}\\ c_3^{(a)}
\end{bmatrix}}_{\mathbf{c}},
\label{eq:solve_n}
\end{equation}
which establishes a direct relation between the orientation vector and the normalized measurements once the position has been fixed. Thus, considering that $N_{\mathrm{V}}=3$, the system \eqref{eq:solve_n} reduces to a $3\times 3$ linear system, and, if $\det U\neq 0$, the orientation vector of the user can be written in closed form as
\begin{equation}
\tilde{\mathbf{n}}=U^{-1}\mathbf{c},
\end{equation}
which concludes the proof.
\end{IEEEproof}

\begin{remark}
Non-collinearity ensures a finite intersection in \eqref{eq:spheres} and linear independence of $\{\hat{\mathbf{u}}_i\}$ ensures $\det U\neq 0$, enabling unique recovery of position and orientation.
\end{remark}
\begin{remark}
Compared to LED-based LeRIS \cite{R2}, which typically needs at least four sources, three VCSELs suffice thanks to narrow beams and dual-mode operation.
\end{remark}
From Proposition~1, it follows that the UE position and orientation can be uniquely determined using three VCSELs when the effect of noise is considered negligible. However, in practice, the received optical signals at the PD are affected by noise, which perturbs the range estimates extracted from each VCSEL. Since these ranges form the basis of the localization procedure, it is essential to quantify the error incurred from each single VCSEL–UE link. In this direction, the following proposition provides the estimation error for the $i$-th VCSEL to the UE link.
\begin{proposition}
The estimation error for the $i$-th VCSEL to the UE link is given as
{\small
\begin{equation}\label{eq:vcsel_err_link}
\Delta d_{i}
= d_{i}\left[
1-\frac{1}{\sqrt{1+\alpha_{i}}}
\sqrt{\alpha_{i}-\frac{z_{R,i}^{2}}{d_{i}^{2}}}
\right],
\end{equation}}
where $\alpha_{i}= \tfrac{P_{\mathrm{LoS},i}}{P_{n,i}}$.
\end{proposition}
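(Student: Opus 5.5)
The plan is to model the noise as a bias on the measured power and push that bias through the closed-form inversion of the Gaussian-beam on-axis intensity. Following the setup in the proof of Proposition~1, I take the PD on the beam axis ($\phi_i=0$) and absorb the orientation factor into a known constant. The noiseless received power from the $i$-th VCSEL is then
\begin{equation*}
P_{\mathrm{LoS},i}=\frac{B_i}{1+d_i^2/z_{R,i}^2},\qquad B_i=\frac{2A_{\mathrm{PD}}P_{t,i}}{\pi w_{0,i}^2},
\end{equation*}
which comes from \eqref{eq:gaussian_intensity}--\eqref{eq:beam_spot} with $r_0=0$. Inverting this relation gives the range estimator
\begin{equation*}
\hat d_i^2=z_{R,i}^2\left(\frac{B_i}{P}-1\right),
\end{equation*}
where $P$ is whatever power is actually measured. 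It is exact when $P=P_{\mathrm{LoS},i}$.

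Next I substitute the noisy measurement \eqref{Noise}. Using $\alpha_i=P_{\mathrm{LoS},i}/P_{n,i}$, it can be written as
\begin{equation*}
P_{r,i}=P_{\mathrm{LoS},i}\,\frac{1+\alpha_i}{\alpha_i}.
\end{equation*}
The key step is to remove the unknown $B_i$ using the exact noiseless identity $B_i/P_{\mathrm{LoS},i}=1+d_i^2/z_{R,i}^2$. This yields
\begin{equation*}
\hat d_i^2=z_{R,i}^2\left[\Big(1+\frac{d_i^2}{z_{R,i}^2}\Big)\frac{\alpha_i}{1+\alpha_i}-1\right]=\frac{\alpha_i d_i^2-z_{R,i}^2}{1+\alpha_i}.
\end{equation*}
Factoring out $d_i^2$ gives
\begin{equation*}
\hat d_i=\frac{d_i}{\sqrt{1+\alpha_i}}\sqrt{\alpha_i-\frac{z_{R,i}^2}{d_i^2}}.
\end{equation*}
Defining the error as $\Delta d_i=d_i-\hat d_i$ then gives \eqref{eq:vcsel_err_link} directly.

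The main obstacle is one of interpretation rather than algebra. I need to fix which estimator the error refers to, and I will take the single-mode RSS inversion above. I also need to treat $P_{n,i}$ as a deterministic additive bias, namely the noise variance $B_oS_i$, rather than a random perturbation. With those choices the computation is a short chain of substitutions.

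I will add a brief check that the square root is real, which requires $\alpha_i d_i^2\ge z_{R,i}^2$; this holds at reasonable SNR. I will also note that $\Delta d_i\to 0$ as $\alpha_i\to\infty$, consistent with the noiseless case of Proposition~1.
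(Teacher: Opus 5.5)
Your proposal is correct and follows essentially the same route as the paper. Both arguments invert the on-axis Gaussian-beam received power (with the $\cos\psi_i$ factor treated as a fixed known constant) to express $d_i$, then substitute $P_{r,i}=P_{\mathrm{LoS},i}(1+1/\alpha_i)$ to get $\hat d_i=\sqrt{(\alpha_i d_i^2-z_{R,i}^2)/(\alpha_i+1)}$, and finally take $\Delta d_i=d_i-\hat d_i$; your added remarks on the realness condition $\alpha_i d_i^2\ge z_{R,i}^2$ and the $\alpha_i\to\infty$ limit are harmless extras that the paper omits.
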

\begin{IEEEproof}
The localization error is defined as the difference between the actual and estimated distance, i.e.,
\begin{equation}
\Delta d_{i}=d_{i}-\hat d_{i}.
\label{eq:error_def}
\end{equation}
Considering that $\phi=0^\circ$ due to the high directivity of VCSELs, and that the beam lies within the UE’s FoV, the true distance $d_{i}$ can be written through \eqref{eq:received_power} as
{\small
\begin{equation}
d_{i}=\sqrt{z_{R,i}^{2}\!\left(\frac{2 P_{t,i} A_{\mathrm{PD}}\cos\psi_{i}}{\pi w_{0,i}^{2}P_{\mathrm{LoS},i}}-1\right)}.
\label{eq:d_PLOS}
\end{equation}}
\vspace{-2.5mm}\\
Furthermore, considering that $P_{r,i}=P_{\mathrm{LoS},i}\left(1+\frac{1}{\alpha_{i}}\right)$, by replacing $P_{\mathrm{LoS},i}$ with $P_{r,i}$ in \eqref{eq:d_PLOS}, we can obtain the estimated distance $\hat d_{i}$ as
{\small
\begin{equation}
\hat d_{i}=\sqrt{\frac{\alpha_{i}d_{i}^2-z_{R,i}^2}{\alpha_{i}+1}}.
\label{eq:d_Pest}
\end{equation}}
\vspace{-2.5mm}\\
Substituting \eqref{eq:d_PLOS} and \eqref{eq:d_Pest} into \eqref{eq:error_def} gives \eqref{eq:vcsel_err_link}, which concludes the proof.
\end{IEEEproof}\vspace{-2mm}   
\subsection{LeRIS-based Communication}\label{sec:LeRIS_Comm}\vspace{-1mm}    
After the UE position has been determined and the LeRIS–UE and LeRIS–LeRIS link visibility has been identified according to FoV and geometric reachability, the AP selects a feasible cascaded route toward the UE. Multiple candidate routes are evaluated, with feasibility captured by
\begin{equation}
\chi_{\pi}=\prod_{s\in S(\pi)} \chi_s,
\end{equation}
where $S(\pi)$ is the set of segments of route $\pi$, and $\chi_s\in\{0,1\}$ indicates whether segment $s$ satisfies the FoV and steering constraints. Specifically, $\chi_s$ is a binary indicator that equals $1$ only when the segment alignment falls within the PD's FoV as constrained in (4), and simultaneously satisfies the angular directivity requirement for viable steering defined in (16). As established in (19), the received signal over a cascaded LeRIS route is shaped jointly by the multiplicative channel gains of its segments and the corresponding path losses, thus the optimal route maximizes the spectral efficiency in (20). So, the end-to-end spectral efficiency for the UE can be expressed as
\begin{equation}
R=\max_{\pi} R_{\pi},
\end{equation}
with
\begin{equation}
R_{\pi}=\chi_{\pi}\,
\log_2\!\left(
1+\frac{G_t\,l_p(\pi)\,G_{\mathrm{r}}(\pi)}{\sigma^2}
\right).
\label{eq:rate_singleUE}
\end{equation}
Here, $l_p(\pi)$ denotes the cumulative path loss along the segments of $\pi$, and $G_{\mathrm{r}}(\pi)$ is the cascaded beamforming gain from the participating LeRIS panels configured hop by hop toward the next LeRIS and the receive gain of the directional UE antenna within the UE FoV. The steering angles for each LeRIS hop and for the final LeRIS–UE segment follow from the estimated UE position as in~\cite{R2}, completing the end-to-end configuration for the selected route.
 
\section{Simulation Results}\label{sec:IV}\vspace{-1mm}   
In this section, we evaluate the performance of an indoor system with a single UE and four LeRIS panels, whose corresponding parameter values for the VCSELs are listed in Table~\ref{table1}, while the communication parameters are provided in Table~\ref{table2}. Each LeRIS is equipped with 24 VCSELs placed along its perimeter, arranged so that their beams jointly span an azimuthal sector of $120^\circ$ with an elevation span of $60^\circ$, while each VCSEL rotates to cover a $5^\circ$ azimuthal segment and thereby provide both wide angular coverage and fine resolution. As a result, the deployed beams complement one another to span the served area, enabling user localization and supporting LeRIS-based communication. Moreover, the UE coordinates are uniformly sampled relative to the corner origin within $x_u \in [0,10]$~m, $y_u \in [0,10]$~m, and $z_u=1.5$~m, corresponding to a typical indoor user height aligned with the LeRIS plane. The azimuth orientation angle $\phi_{\mathrm{UE}}$ of the user is uniformly distributed over $[0,2\pi]$, while the elevation orientation is fixed at $\theta_{\mathrm{UE}}=0$, reflecting alignment with the LeRISs’ centres. Finally, to quantify the system performance under realistic conditions, we conduct a Monte Carlo simulation with $10^5$ iterations, where in each iteration, both the user position and orientation are randomly generated within the specified bounds.

\begin{table}[h]
\vspace{-2mm}
	\renewcommand{\arraystretch}{1.1}
    \captionsetup{font=small}
	\caption{\textsc{Simulation Parameters for VCSELs}}
	\label{PWE_char}
	\centering
	\begin{tabular}{ll}
		\hline
		\bfseries Parameter & \bfseries Value \\
		\hline\hline
        LeRIS	1 Centre coordinates	 	    & $\left(0, 5, 1.5\right)$ 	
        \\
		LeRIS	2 Centre coordinates 	    & $\left(10, 5, 1.5\right)$ 
        \\
  	    LeRIS	3 Centre coordinates 	    & $\left(5, 0, 1.5\right)$ 	
        \\
        LeRIS	4 Centre coordinates	 	    & $\left(5, 10, 1.5\right)$ 
        \\
        PD area 	    & $A_\mathrm{PD} = 1$ cm${}^2$
        \\
        Transmit power 	    & $P_{t,i} = 10$ mW 	
        \\
        Beam waist          & $\omega_0 = 5.6$ $\mu$m
        \\
        Speed of light & $c = 3\times10^8 \mathrm{m/s}$
        \\
        VCSEL wavelength  & $\lambda_o = 950$ nm
        \\
         Optical bandwidth  & $B_o = 1$ GHz
        \\
         Relative intensity noise & $\mathrm{RIN} = -155$ dB/HZ
        \\
        Preamplifier noise figure & $F_{n}$ = 5 dB
        \\
        PD FoV           & $\Psi = 90\degree$
        \\
        Noise variance	 	    & $P_{n,i} = 2.5\times 10^{-12}$A$^2$ 
        \\
        Load resistance 	    & $R_\mathrm{L} = 50\Omega$ 
        \\
         Absolute temperature & $T = 300 \mathrm{K}$
         \\
         Boltzmann constant & $k_B =1.380649 \times10^{-23} \mathrm{K/J}$
         \\
        PD responsivity 	    & $R_{\mathrm{PD}} = 0.7$ A/W
        \\
		\hline
	\end{tabular}
    \label{table1}
\end{table}
\vspace{-3mm} 
\begin{table}
\vspace{-2mm}
	\renewcommand{\arraystretch}{1}
    \captionsetup{font=small}
	\caption{\textsc{Simulation Parameters for mmWave}}
	\label{PWE_char}
	\centering
	\begin{tabular}{ll}
		\hline
		\bfseries Parameter & \bfseries Value \\
		\hline\hline
        mm-wave Wavelength	    & $ \lambda_m= 10^{-2}  $m
        \\
        Transmit power & $P_t=1$ W
        \\
        Transmitter gain & $G_t=10$ dB
        \\
         Angle of directivity & $\theta_r= \frac{\pi}{3}$
         \\
        Reference distance & $d_{r,0}=1$ m
        \\
        Path loss exponent & $n_i=2$
        \\
        Element spacing & $D=\frac{\lambda_m}{2}$
        \\
        RIS efficiency & $n_\mathrm{eff}=100\%$
        \\
        AWGN variance & $\sigma^2 = -130$ dB 
        \\
		\hline
	\end{tabular}
    \label{table2}
\end{table}

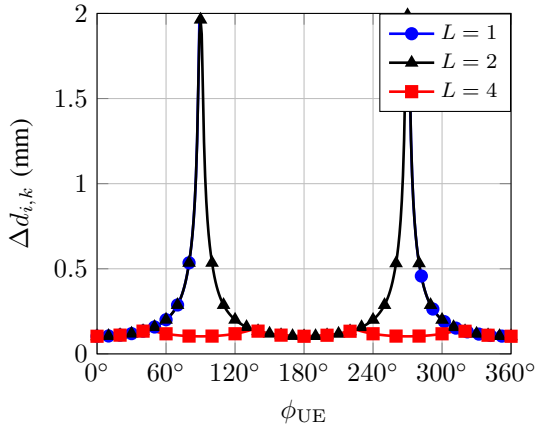
\begin{figure}[ht]
\vspace{-0.3cm}
  \centering
  \begin{tikzpicture}
    \begin{axis}[
      width=0.8\linewidth,
      xlabel = {$\phi_{\mathrm{UE}}$},
      ylabel = {$\Delta d_{i,k}$ (mm)},
      xmin = 0, xmax = 360,
      ymin = 0, ymax = 2.0,                 
      ytick = {0,0.5,1,1.5,2},
      xtick = {0,60,120,180,240,300,360},
      xticklabels = {$0^\circ$, $60^\circ$, $120^\circ$, $180^\circ$, $240^\circ$, $300^\circ$, $360^\circ$},
      grid = major,
      restrict y to domain=0:2.0,           
      scaled y ticks = false,               
      yticklabel style = {/pgf/number format/fixed},
      legend entries = {{$L=1$}, {$L=2$}, {$L=4$}},
      legend cell align = {left},
      legend style={font=\footnotesize, at={(1,1)}, anchor=north east}
    ]
      \addplot[blue,  mark=*,         mark repeat=10, mark size=2, line width=1pt]
        table[y expr=1000*\thisrowno{1}] {Fig1/Fig1_single.txt};
      \addplot[black, mark=triangle*, mark repeat=10, mark size=2, line width=1pt]
        table[y expr=1000*\thisrowno{1}] {Fig1/Fig1_2RIS.txt};
      \addplot[red,   mark=square*,   mark repeat=20, mark size=2, line width=1pt]
        table[y expr=1000*\thisrowno{1}] {Fig1/Fig1_optimal.txt};
    \end{axis}
  \end{tikzpicture}
  \vspace{-2mm}
  \captionsetup{font=small}
  \caption{$\Delta d$ versus azimuth angle for different values of $L$.}
  \vspace{-2mm}
  \label{fig:phi_vs_error}
\end{figure}
Fig.~\ref{fig:phi_vs_error} shows the localization error $\Delta d_{i}$ versus the azimuth angle for different numbers of active panels. With a single panel at $0^\circ$, the error remains low within its $60^\circ$ footprint but increases sharply outside this range, reflecting the high precision but limited coverage of directional VCSEL beams. Activating two opposite LeRIS panels extends coverage and lowers error over a wider range, while full activation of all four panels maintains consistently low error across the full $360^\circ$ orientation range. This demonstrates that cooperative VCSEL-based LeRIS panels can ensure robust and accurate localization regardless of user orientation. Compared to wide-beam LED systems, the use of narrow-beam VCSELs with rotational scanning provides significantly higher accuracy, validating the efficiency of the proposed architecture.

\begin{figure}[ht]
\vspace{-0.3cm}
\centering
\begin{tikzpicture}
    \begin{axis}[
        width=0.8\linewidth,
        xlabel = {SNR(dB)},
        ylabel = { $R$ (bps/Hz)},
        ymin = 0,
        ymax = 16,
        xmin = 90,
        xmax = 130,
        xtick = {90,95,...,130},
        ytick = {0,4,...,16},
        grid = major,
        legend cell align = {left},
                legend style={font=\footnotesize},
                legend style={at={(0,1)},anchor=north west},
    ]

    \addplot[
        mark=triangle*,
        color=blue,
        mark repeat = 2,
                mark size = 2,
                line width = 1pt,
                style = solid,
    ]
    table{FigB_ICC/1RIS.txt};
    \addlegendentry{$L=1$}

    \addplot[
        mark=*, 
        color=black,
        mark repeat = 2,
                mark size = 2,
                line width = 1pt,
                style = solid,
    ]
    table{FigB_ICC/2RIS_fig2_5k.txt};
    \addlegendentry{$L=2$}

    \addplot[
        mark=square*,
        color=red,
        mark repeat = 2,
                mark size = 2,
                line width = 1pt,
                style = solid,
    ]
    table{FigB_ICC/4RIS.txt};
    \addlegendentry{$L=4$}
    
    \end{axis}
\end{tikzpicture}
        \vspace{-2mm}
        \captionsetup{font=small}
\caption{$R$ versus SNR for various numbers of LeRIS panels.}
    \label{fig:figure3}
     \vspace{-0.4cm}
\end{figure}
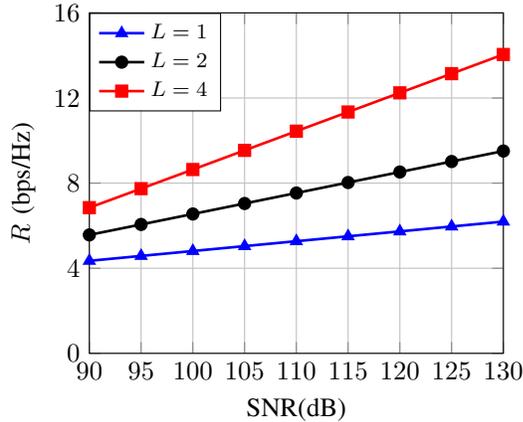
Fig.~\ref{fig:figure3} presents the spectral efficiency $R$ as a function of SNR for $L = 1$, $L = 2$, and $L = 4$ LeRIS panels, each with $N = 2500$ elements. As expected, increasing the number of active panels enhances spectral efficiency by offering more reflective paths and improving localization precision. This enables reliable high-rate communication across a broad SNR range, especially at higher SNRs where reflected beams are more effectively exploited.

\begin{figure}[ht]
\vspace{-0.3cm}
\centering
\begin{tikzpicture}
    \begin{axis}[
        width=0.8\linewidth,
        xlabel = {$N$},
        ylabel = { $R$ (bps/Hz)},
        ymin = 0,
        ymax = 16,
        xmin = 100,
        xmax = 2500,
        xtick = {100,500,1000,1500,2000,2500}, 
        ytick = {0,4,...,16},
        grid = major,
        legend cell align = {left},
                legend style={font=\footnotesize},
                legend style={at={(0,1)},anchor=north west},
    ]

    \addplot[
        mark=triangle*,
        color=blue,
        mark repeat = 5,
                mark size = 2,
                line width = 1pt,
                style = solid,
    ]
    table{FigC_ICC/Single_RIS_data1.txt};
    \addlegendentry{$L=1$}

    \addplot[
        mark=*, 
        color=black,
        mark repeat = 5,
                mark size = 2,
                line width = 1pt,
                style = solid,
    ]
    table{FigC_ICC/figure_Two_RIS_data1.txt};
    \addlegendentry{$L=2$}

    \addplot[
        mark=square*,
        color=red,
        mark repeat = 5,
                mark size = 2,
                line width = 1pt,
                style = solid,
    ]
    table{FigC_ICC/Four_RIS_data1.txt};
    \addlegendentry{$L=4$}
    
    \end{axis}
\end{tikzpicture}
        \vspace{-2mm}
        \captionsetup{font=small}
\caption{$R$ versus $N$ for various numbers of LeRIS panels}
    \label{fig:figure4}
     \vspace{-0.4cm}
\end{figure}
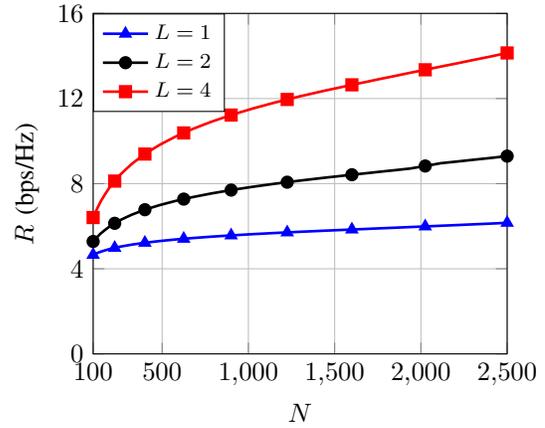
Fig.~\ref{fig:figure4} illustrates the spectral efficiency $R$ as a function of the number of reflecting elements $N$ per LeRIS panel for three deployment configurations. When only a single panel is active, the system exhibits restricted angular coverage, which limits the likelihood of effective beam alignment and reduces the spatial diversity available for both localization and communication. This limitation leads to degraded spectral efficiency, particularly in scenarios requiring omnidirectional coverage or high-reliability links. As additional LeRIS panels are activated, the PWE gains greater spatial flexibility, allowing it to capture stronger reflection paths and provide broader angular support for the user. The resulting improvements in beamforming gain and localization precision translate into a marked increase in spectral efficiency. Notably, the $L = 4$ configuration consistently outperforms the $L = 1$ case across all values of $N$, achieving more than twice the throughput in several regimes. These results underscore the critical role of both panel multiplicity and element scalability in enhancing the overall performance of VCSEL-based LeRIS systems.\vspace{-4mm}   
\section{Conclusion}
This paper presented a VCSEL-based LeRIS architecture for PWEs that enables joint user localization, orientation estimation, and mmWave communication. By exploiting the Gaussian beam profile and dual-mode diversity of VCSELs, closed-form solutions were derived to estimate user position and orientation using only three VCSEL sources integrated along the panel perimeter. A distance estimation error expression was derived to evaluate localization accuracy. Finally, simulation results confirmed that the proposed system achieves millimeter-level localization accuracy and robust orientation estimation while maintaining high spectral efficiency. These findings demonstrate that VCSEL-integrated LeRIS panels offer a practical and scalable solution for enabling multi-functional RIS operation in future 6G wireless networks.

\vspace{-3mm}   


\end{document}